\documentclass[12pt,a4]{article}

\usepackage{abstract}
\usepackage{amsfonts}
\usepackage{amsmath}
\usepackage{amssymb}
\usepackage{amsthm}
\usepackage{changepage}
\usepackage{setspace}
\usepackage[multiple]{footmisc}
\usepackage[round]{natbib}
\usepackage{graphicx}
\usepackage{subfig}
\usepackage{titlesec}

\linespread{1.5}
\setlength{\textwidth}{6.5in}
\setlength{\oddsidemargin}{0in}
\setlength{\textheight}{9.2in}
\setlength{\topmargin}{-.6in}
\setlength{\parskip}{.05in}

\newcommand{\mbA}{\mathbf{A}}
\newcommand{\mbbR}{\mathbb{R}}
\newcommand{\mbe}{\mathbf{e}}
\newcommand{\mbp}{\mathbf{p}}
\newcommand{\mbx}{\mathbf{x}}
\newcommand{\mby}{\mathbf{y}}
\newcommand{\mbzero}{\mathbf{0}}
\newcommand{\pref}{\succ}
\newcommand{\wpref}{\succcurlyeq}

\newtheorem{definition}{\normalfont\scshape Definition}
\newtheorem{proposition}{\normalfont\scshape Proposition}
\newtheorem{theorem}{\normalfont\scshape Theorem}

\renewcommand{\cal}{\mathcal}
\renewcommand{\epsilon}{\varepsilon}
\renewcommand{\geq}{\geqslant}
\renewcommand{\leq}{\leqslant}

\titleformat{\section}{\normalfont\scshape\centering}{\thesection}{1em}{}

\bibliographystyle{rpbib}

\allowdisplaybreaks

\begin{document}

\title{Rationalizability, Cost-Rationalizability, \\ and Afriat's Efficiency Index\thanks{We are grateful to Roy Allen, Thomas Demuynck, Pawel Dziewulski, Federico Echenique, Georgios Gerasimou, Ludovic Renou, and Lanny Zrill for helpful comments.}}

\date{\normalsize{\today}}

\author{Matthew Polisson and John K.-H. Quah\thanks{Emails: matt.polisson@leicester.ac.uk; ecsqkhj@nus.edu.sg.}}

\maketitle

\begin{abstract} \linespread{1.1} \selectfont
This note explains the equivalence between approximate rationalizability and approximate cost-rationalizability within the context of consumer demand. In connection with these results, we interpret \citeauthor{afriat1973}'s (\citeyear{afriat1973}) critical cost efficiency index (CCEI) as a measure of approximate rationalizability through cost inefficiency, in the sense that an agent is spending more money than is required to achieve her utility targets.
\end{abstract}

\begin{abstract} \linespread{1.1} \selectfont
\textsc{JEL Codes:} D11, D12.
\end{abstract}

\begin{abstract} \linespread{1.1} \selectfont
\textsc{Keywords:} rationalizability, cost-rationalizability, Afriat's efficiency index, CCEI.
\end{abstract}

\newpage

\section{Introduction} \label{section:introduction}

The notion of \emph{rationalizability} by utility maximization is natural and normatively uncontroversial in the revealed preference literature---a utility function is said to \emph{rationalize} expenditure data on different goods if, at each observation, it assigns weakly higher utility to the chosen bundle than to any other bundle that is weakly cheaper at the prevailing prices. But, in fact, there is an equally natural \emph{dual} notion which is just as important because it implicitly justifies many empirical studies of consumer demand and this is the notion of \emph{cost-rationalizability}---a utility function is said to \emph{cost-rationalize} expenditure data on different goods if, at each observation, any bundle that achieves at least as much utility as the observed bundle must be weakly more expensive.

In this note, we show that these two notions of rationalizabilty are \emph{definitionally equivalent} under mild conditions. We also show that they are \emph{observationally equivalent} in the sense that, in a data set with finitely many observations, both concepts are characterized by the generalized axiom of revealed preference (see \cite{afriat1967}, \cite{diewert1973}, and \cite{varian1982}). Our results can be seen as analogous to the famous \emph{duality} between Marshallian and Hicksian demands as solutions to the (primal) utility maximization and (dual) expenditure minimization problems, respectively.

In a sufficiently rich empirical setting, a data set will rarely be exactly rationalizable (equivalently, cost-rationalizable) and some way of measuring its departure from exact rationalizability is then required. The most common approach in the applied revealed preference literature is to calculate \citeauthor{afriat1973}'s (\citeyear{afriat1973}) \emph{critical cost efficiency index} (CCEI). While Afriat's measure is widely used in empirical applications, its motivation is not always well understood. Within the context of the duality between rationalizability and cost-rationalizability, we argue that the CCEI could be motivated as a measure of \emph{approximate} cost-rationalizability.

\section{Approximate Rationalizability} \label{section:rationalizability}

\noindent At observation $t$, the prices of $\ell$ goods are $\mbp^t$ and the consumer purchases the bundle $\mbx^t$ of these $\ell$ goods. We refer to the collection of observations, $\cal{O} = \{\mbp^t, \mbx^t\}_{t = 1}^T$, as a \emph{data set}. We assume throughout that $\mbp^t \in \mbbR^\ell_{++}$ and $\mbx^t \in \mbbR^\ell_+ \setminus \{\mbzero\}$.

\begin{definition} \label{definition:rationalizability}
\emph{Let $\cal{U}$ be a collection of utility functions defined on the consumption space $X \subseteq \mbbR^\ell_+$. The data set $\cal{O} = \{\mbp^t, \mbx^t\}_{t = 1}^T$ is \emph{rationalizable in $\cal{U}$ with respect to the vector of efficiency coefficients $\mbe = \{e^t\}_{t = 1}^T$} (where $e^t \in (0,1]$ for all $t$) (or simply $\mbe$-rationalizable in $\cal{U}$) if there is a utility function $U : X \to \mbbR$ belonging to $\cal{U}$, such that, at every observation $t$,
\[
U(\mbx^t) \geq U(\mbx) \; \; \mbox{for all $\mbx \in \cal{B}^t(e^t) = \{\mbx' \in X : \mbp^t \cdot \mbx' \leq e^t \mbp^t \cdot \mbx^t\}$.}
\]
A data set $\cal{O}$ is said to be \emph{exactly} rationalizable (or simply \emph{rationalizable}) in $\cal{U}$ if it is $(1, 1, \ldots, 1)$-rationalizable in $\cal{U}$.}
\end{definition}

Loosely speaking, the further is $\mbe$ from $(1, 1, \ldots, 1)$, the weaker is the rationalizability displayed within the data set. \cite{afriat1973} proposes the \emph{critical cost efficiency index} (CCEI) as a measure of a data set's distance from exact rationalizability.

\begin{definition} \label{definition:ccei}
\emph{Given a data set $\cal{O}$ and a family of utility functions $\cal{U}$, the CCEI is given by
\begin{equation} \label{equation:origin}
\sup \{e : \mbox{$\cal{O}$ is $(e, e, \ldots, e)$-rationalizable in $\cal{U}$}\}.
\end{equation}}
\end{definition}

To understand what the CCEI means, suppose that a data set $\cal{O}$ has a CCEI of 0.95 for the family of utility functions $\cal U$. Then the agent's observed choices are not exactly rationalizable in the sense that for \emph{any} utility function $U$ in $\cal{U}$, there is some observation $s$ and bundle $\mbx'$ such that $U(\mbx^s) < U(\mbx')$ and $\mbp^s \cdot \mbx' \leq \mbp^s \cdot \mbx^s$. However, this ``irrationality'' is limited in the sense that for any $\epsilon > 0$ sufficiently small, there is a utility function $\bar{U}$ in $\cal{U}$ such that $\bar{U}(\mbx^t) \geq \bar{U}(\mbx)$ for any $\mbx$ satisfying $\mbp^t \cdot \mbx \leq (0.95 - \epsilon) \mbp^t \cdot \mbx^t$ for all observations $t$; in other words, $\mbx^t$ gives weakly higher utility than any bundle that is more than 5\% cheaper.

\cite{afriat1973} provides a way to calculate the CCEI. To be precise, let $\cal{U}_{LNS}$ be the family of locally nonsatiated utility functions defined on  $X=\mbbR^{\ell}_+$.  It is shown that if $\cal{O}$ is $\mbe$-rationalizable by a member of $\cal{U}_{LNS}$, then $\cal{O}$ must obey a particular property that is easy to check; furthermore, if a data set satisfies this property then it is $\mbe$-rationalizable by a member of $\cal{U}_{WB}$, the family of \emph{well-behaved} utility functions, in the sense that the utility functions are defined on $X = \mbbR^{\ell}_{+}$, strictly increasing, and continuous. Given this result, a data set's CCEI can be obtained via a binary search over $\mbe = (e, e, \ldots, e)$, for $e \in (0,1]$, and the CCEI for the family $\cal{U}_{LNS}$ must coincide with that for the family $\cal{U}_{WB}$.

We now introduce the concepts needed to explain Afriat's result.

\begin{definition} \label{definition:relations}
\emph{Let $\cal{O} = \{\mbp^t, \mbx^t\}_{t = 1}^T$ be a data set and $\mbe = \{e^t\}_{t = 1}^T$ a vector of efficiency coefficients. For any pair of bundles $(\mbx^t, \mbx^s)$, the bundle $\mbx^t$ is \emph{(strictly) directly revealed preferred} to the bundle $\mbx^s$, denoted $\mbx^t \wpref_0^* (\pref_0^*) \; \mbx^s$, whenever $\mbp^t \cdot \mbx^s \leq (<) \; e^t \mbp^t \cdot \mbx^t$. For any finite sequence of bundles $(\mbx^t, \mbx^i, \mbx^j, \ldots, \mbx^l, \mbx^s)$, the bundle $\mbx^t$ is \emph{revealed preferred} to the bundle $\mbx^s$, denoted $\mbx^t \wpref^* \mbx^s$, whenever $\mbx^t \wpref_0^* \mbx^i$, $\mbx^i \wpref_0^* \mbx^j$, $\ldots$ , $\mbx^l \wpref_0^* \mbx^s$.}
\end{definition}

\begin{definition} \label{definition:garp}
\emph{Given a vector of efficiency coefficients $\mbe = \{e^t\}_{t = 1}^T$, a data set $\cal{O} = \{\mbp^t, \mbx^t\}_{t = 1}^T$ obeys $\mbe$-GARP (where GARP stands for the \emph{generalized axiom of revealed preference}) so long as $\mbx^t \wpref^* \mbx^s \implies \mbx^s \not\pref_0^* \mbx^t$. $\cal{O}$ satisfies GARP if it satisfies $\mbe$-GARP for $\mbe = (1, 1, \ldots, 1)$.\footnote{The term GARP is originally from \cite{varian1982}; \cite{afriat1967} introduces and refers to its equivalent as \emph{cyclical consistency}, and \cite{afriat1973} develops a modified version of this property (which is equivalent to $\mbe$-GARP with $\mbe = (e, e, \ldots, e)$).}}
\end{definition}

The following result is well known. For proofs, see \cite{afriat1973} and \cite{halevy2018}. We provide a further proof in the Appendix for the sake of completeness.

\begin{theorem} \label{theorem:afriat}
Let $\cal{O} = \{\mbp^t, \mbx^t\}_{t = 1}^T$ be a data set and $\mbe = \{e^t\}_{t = 1}^T$ a vector of efficiency coefficients. The following statements are equivalent:
\begin{itemize}
\item[(1)] $\cal{O}$ is $\mbe$-rationalizable in $\cal{U}_{LNS}$.
\item[(2)] $\cal{O}$ obeys $\mbe$-GARP.
\item[(3)] There is a set of numbers $\{\phi^t, \lambda^t\}_{t = 1}^T$ (with $\phi^t \in \mbbR$ and $\lambda^t \in \mbbR_{++}$), such that, at all $t, s$,
\[
\phi^s \leq \phi^t + \lambda^t \mbp^t \cdot (\mbx^s - e^t \mbx^t).
\]
\item[(4)] $\cal{O}$ is $\mbe$-rationalizable in $\cal{U}_{WB}$.\footnote{Recall that the families $\cal{U}_{LNS}$ and $\cal{U}_{WB}$ both contain utility functions which are defined on $X = \mbbR^\ell_+$. Also notice that statement (4) could be strengthened to say that $\cal{O}$ is $\mbe$-rationalizable in the family of well-behaved {\em and concave} utility functions.}
\end{itemize}
\end{theorem}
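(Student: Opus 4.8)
The plan is to prove the cycle of implications $(1) \Rightarrow (2) \Rightarrow (3) \Rightarrow (4) \Rightarrow (1)$. For $(1) \Rightarrow (2)$, suppose $U \in \cal{U}_{LNS}$ $\mbe$-rationalizes $\cal{O}$. I would first record two facts. If $\mbx^t \wpref_0^* \mbx^s$, then $\mbp^t \cdot \mbx^s \leq e^t \mbp^t \cdot \mbx^t$, so $\mbx^s \in \cal{B}^t(e^t)$ and $U(\mbx^t) \geq U(\mbx^s)$. If $\mbx^t \pref_0^* \mbx^s$, then $\mbp^t \cdot \mbx^s < e^t \mbp^t \cdot \mbx^t$; by continuity of $\mbx \mapsto \mbp^t \cdot \mbx$, a small enough ball about $\mbx^s$ (intersected with $X$) lies in $\cal{B}^t(e^t)$, and local nonsatiation produces a point there with utility above $U(\mbx^s)$, so $U(\mbx^t) > U(\mbx^s)$. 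Chaining the first fact along a revealed-preference sequence gives $\mbx^t \wpref^* \mbx^s \Rightarrow U(\mbx^t) \geq U(\mbx^s)$; if $\mbe$-GARP failed we would also have $\mbx^s \pref_0^* \mbx^t$ for some such pair, whence $U(\mbx^t) \geq U(\mbx^s) > U(\mbx^t)$, a contradiction.

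The heart of the argument is $(2) \Rightarrow (3)$. Rewrite the target inequality as $\phi^s - \phi^t \leq \lambda^t(\mbp^t \cdot \mbx^s - e^t \mbp^t \cdot \mbx^t)$. The key observation is that if numbers $\{\phi^t\}$ can be chosen so that $\mbx^t \wpref_0^* \mbx^s \Rightarrow \phi^s \leq \phi^t$ and $\mbx^t \pref_0^* \mbx^s \Rightarrow \phi^s < \phi^t$, then each of the finitely many constraints is merely a lower bound on $\lambda^t$: when the bracket is positive it forces $\lambda^t$ above a particular ratio; when it is zero the constraint is just $\phi^s \leq \phi^t$; and when it is negative, $\phi^s < \phi^t$ makes it hold for all sufficiently large $\lambda^t$. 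So choosing each $\lambda^t$ large enough finishes $(3)$, and it remains only to build the $\phi^t$. Here is where $\mbe$-GARP enters. Let $R$ be the transitive closure of $\wpref_0^*$ on $\{1, \ldots, T\}$; $\mbe$-GARP is precisely the statement that $t \, R \, s$ and $\mbx^s \pref_0^* \mbx^t$ are never both true. Put $t \approx s$ iff $t \, R \, s$ and $s \, R \, t$; then $R$ induces a partial order on the $\approx$-classes, which I would extend to a linear order, assign strictly increasing real values to the classes along it, and set $\phi^t$ equal to the value attached to the class of $t$. Then $\mbx^t \wpref_0^* \mbx^s$ gives $t \, R \, s$ and hence $\phi^s \leq \phi^t$; and if in addition $\mbx^t \pref_0^* \mbx^s$ then $t$ and $s$ lie in different classes (otherwise $t \approx s$ together with $\mbx^t \pref_0^* \mbx^s$ contradicts $\mbe$-GARP), so $\phi^s < \phi^t$.

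For $(3) \Rightarrow (4)$, define
\[
U(\mbx) = \min_{t = 1, \ldots, T} \left\{ \phi^t + \lambda^t \left( \mbp^t \cdot \mbx - e^t \mbp^t \cdot \mbx^t \right) \right\}, \qquad \mbx \in \mbbR^\ell_+ .
\]
Being a minimum of finitely many affine functions, each with the strictly positive gradient $\lambda^t \mbp^t$, $U$ is continuous, concave, and strictly increasing, hence lies in $\cal{U}_{WB}$. The inequalities in $(3)$ say every term in the bracket defining $U(\mbx^t)$ is at least $\phi^t$, so $U(\mbx^t) \geq \phi^t$; and whenever $\mbp^t \cdot \mbx \leq e^t \mbp^t \cdot \mbx^t$ the $t$-th term is at most $\phi^t$, so $U(\mbx) \leq \phi^t \leq U(\mbx^t)$, i.e.\ $U$ $\mbe$-rationalizes $\cal{O}$ (indeed within the well-behaved \emph{and concave} utility functions). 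Finally $(4) \Rightarrow (1)$ is immediate because a strictly increasing function is locally nonsatiated, so $\cal{U}_{WB} \subseteq \cal{U}_{LNS}$. The only genuinely delicate step is the passage from $\mbe$-GARP to the potentials $\{\phi^t\}$ in $(2) \Rightarrow (3)$; the remaining steps are a short rationality argument and a direct verification.
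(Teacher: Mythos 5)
Your steps $(1) \Rightarrow (2)$, $(3) \Rightarrow (4)$, and $(4) \Rightarrow (1)$ are correct and essentially identical to the paper's. The gap is in $(2) \Rightarrow (3)$, which you rightly identify as the heart of the argument: your sign analysis there is reversed. Writing $a^{t,s} = \mbp^t \cdot \mbx^s - e^t \mbp^t \cdot \mbx^t$, the target inequality is $\phi^s - \phi^t \leq \lambda^t a^{t,s}$. When $a^{t,s} < 0$ (i.e., $\mbx^t \pref_0^* \mbx^s$), the right-hand side tends to $-\infty$ as $\lambda^t \to \infty$, so the inequality holds for all sufficiently \emph{small} $\lambda^t > 0$ and \emph{fails} once $\lambda^t$ is large: the constraint is an \emph{upper} bound $\lambda^t \leq (\phi^s - \phi^t)/a^{t,s}$, not one that large $\lambda^t$ takes care of. Hence the constraints are not ``merely lower bounds,'' and ``choose each $\lambda^t$ large enough'' does not finish the proof.

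This is not a repairable slip within your scheme. Once the $\phi^t$ are fixed by an arbitrary order-respecting assignment, observation $t$ faces lower bounds $(\phi^s - \phi^t)/a^{t,s}$ from pairs with $a^{t,s} > 0$ and $\phi^s > \phi^t$, and upper bounds $(\phi^s - \phi^t)/a^{t,s} > 0$ from pairs with $a^{t,s} < 0$, and nothing in your construction guarantees that the former lie below the latter---for instance, if some gap $\phi^s - \phi^t$ is large while $a^{t,s}$ is a small positive number, and simultaneously $\phi^t - \phi^{s'}$ is small while $|a^{t,s'}|$ is large, the bounds on $\lambda^t$ are incompatible. Reconciling these bounds is precisely the nontrivial content of Afriat's theorem: the $\phi^t$ must be chosen with reference to the \emph{magnitudes} of the $a^{t,s}$, not merely to the revealed-preference order, which is why standard proofs construct $\{\phi^t, \lambda^t\}$ jointly (e.g., by induction on $T$ as in Fostel, Scarf, and Todd, or via linear-programming duality). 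The paper sidesteps the issue by noting that $\mbe$-GARP makes the matrix $\mbA = (a^{t,s})$ cyclically consistent and then invoking the theorem of \cite{forges2009}. Your construction of the potentials from the transitive closure and a linear extension is fine as far as it goes---it is essentially the observation that $\mbe$-GARP yields an order-respecting $\phi$---but it does not by itself deliver statement (3).
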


An immediate corollary of this result is that a data set's CCEI for the family of utility functions $\cal{U}_{LNS}$ coincides with that for the family $\cal{U}_{WB}$ and is given by
\begin{equation} \label{equation:carp}
\sup \{e : \mbox{$\cal{O}$ satisfies $(e, e, \ldots, e)$-GARP}\}.
\end{equation}

\section{Approximate Cost-Rationalizability} \label{section:cost-rationalizability}

\begin{definition} \label{definition:cost-rationalizability}
\emph{Let $\cal{U}$ be a collection of utility functions defined on the consumption space $X \subseteq \mbbR^\ell_+$. The data set $\cal{O} = \{\mbp^t, \mbx^t\}_{t = 1}^T$ is \emph{cost-rationalizable in $\cal{U}$ with respect to the vector of efficiency coefficients $\mbe = \{e^t\}_{t = 1}^T$} (or simply $\mbe$-cost-rationalizable in $\cal{U}$) if there is a utility function $U : X \to \mbbR$ belonging to $\cal{U}$, such that, at every observation $t$,
\[
e^t \mbp^t \cdot \mbx^t \leq \mbp^t \cdot \mbx \; \; \mbox{for all $\mbx \in \cal{P}^t = \{\mbx' \in X : U(\mbx') \geq U(\mbx^t)\}$.}
\]
A data set $\cal{O}$ is said to be \emph{exactly} cost-rationalizable (or simple \emph{cost-rationalizable}) in $\cal{U}$ if it is $(1, 1, \ldots, 1)$-cost-rationalizable in $\cal{U}$.}
\end{definition}

Exact cost-rationalizability means that there is some utility function $U$ in $\cal{U}$ for which the bundle chosen by the agent at each observation $t$ could be understood as the cheapest way of achieving a utility target $U(\mbx^t)$ given the prevailing prices $\mbp^t$.\footnote{See also the related notion of cost-rationalizability within the context of production in \cite{afriat1972}, \cite{hanoch1972}, and \cite{varian1984}. One important difference between production and consumption is that output levels are \emph{observed} in the former, while utility targets are \emph{unobserved} in the latter.}

The next proposition relates $\mbe$-rationalizability and $\mbe$-cost-rationalizability.

\begin{proposition} \label{proposition:equivalence}
\emph{(a)} If a locally nonsatiated utility function $U : \mbbR^\ell_+ \to \mbbR$ $\mbe$-rationalizes $\cal{O}$, then $U$ also $\mbe$-cost-rationalizes $\cal{O}$. \emph{(b)} If a continuous utility function $U : \mbbR^\ell_+ \to \mbbR$ $\mbe$-cost-rationalizes $\cal{O}$, then $U$ also $\mbe$-rationalizes $\cal{O}$.
\end{proposition}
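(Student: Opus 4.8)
The proposition is a clean duality statement, and both directions should follow from short contrapositive arguments exploiting the definitions of the two sets $\cal{B}^t(e^t)$ and $\cal{P}^t$. The plan is to show, for each observation $t$ separately, that the defining inequality of one notion fails if and only if (modulo the local nonsatiation / continuity hypothesis) the defining inequality of the other fails.

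For part (a), suppose $U$ is locally nonsatiated and $\mbe$-rationalizes $\cal{O}$ but fails to $\mbe$-cost-rationalize it at some $t$: then there is $\mbx \in \cal{P}^t$, i.e.\ $U(\mbx) \geq U(\mbx^t)$, with $\mbp^t \cdot \mbx < e^t \mbp^t \cdot \mbx^t$. The idea is to perturb $\mbx$ downward slightly using local nonsatiation: choose a neighborhood of $\mbx$ small enough that every point in it still costs strictly less than $e^t\mbp^t\cdot\mbx^t$ at prices $\mbp^t$, and within it pick $\mbx'$ with $U(\mbx') > U(\mbx) \geq U(\mbx^t)$. Then $\mbx' \in \cal{B}^t(e^t)$ (indeed strictly inside) yet $U(\mbx') > U(\mbx^t)$, contradicting $\mbe$-rationalizability. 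So $U$ must $\mbe$-cost-rationalize $\cal{O}$. (Note local nonsatiation is the hypothesis doing the work here — without it, the chosen bundle could be a satiation point of a cheaper region.)

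For part (b), suppose $U$ is continuous and $\mbe$-cost-rationalizes $\cal{O}$ but fails to $\mbe$-rationalize it at some $t$: then there is $\mbx \in \cal{B}^t(e^t)$, i.e.\ $\mbp^t \cdot \mbx \leq e^t \mbp^t \cdot \mbx^t$, with $U(\mbx) > U(\mbx^t)$. The plan is to scale $\mbx$ toward the origin: for $\theta \in [0,1]$ close to $1$, continuity of $U$ gives $U(\theta\mbx) > U(\mbx^t)$ (using $U(1\cdot\mbx) > U(\mbx^t)$ and continuity of $\theta \mapsto U(\theta\mbx)$), so $\theta\mbx \in \cal{P}^t$; but for $\theta < 1$ we have $\mbp^t \cdot (\theta\mbx) = \theta\,\mbp^t\cdot\mbx \leq \theta e^t\mbp^t\cdot\mbx^t < e^t\mbp^t\cdot\mbx^t$ (using $\mbp^t \in \mbbR^\ell_{++}$ and $\mbx \neq \mbzero$, so $\mbp^t \cdot \mbx > 0$), contradicting $\mbe$-cost-rationalizability. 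Hence $U$ $\mbe$-rationalizes $\cal{O}$.

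The two arguments are essentially mirror images, and neither step is a serious obstacle; the only point requiring a little care is making sure the perturbation in (a) and the scaling in (b) land \emph{strictly} inside the relevant budget set so that the strict inequalities needed to reach a contradiction are available — in (a) this is why we perturb to a strictly cheaper point before invoking local nonsatiation on an open neighborhood, and in (b) this is why scaling by $\theta$ strictly below $1$ (rather than using $\mbx$ itself) is what forces the strict cost comparison. I expect the writeup of each direction to be only a few lines.
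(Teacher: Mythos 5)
Your proof is correct and follows essentially the same contradiction-plus-perturbation route as the paper: local nonsatiation applied on a neighbourhood small enough to stay strictly inside the budget set for (a), and a continuity perturbation for (b) (your radial scaling just makes explicit what the paper leaves implicit). One negligible slip in (b): the strict inequality $\theta e^t \mbp^t \cdot \mbx^t < e^t \mbp^t \cdot \mbx^t$ requires $e^t \mbp^t \cdot \mbx^t > 0$, which follows from $\mbx^t \neq \mbzero$, not from $\mbx \neq \mbzero$ as your parenthetical states (and if $\mbx = \mbzero$ the contradiction is immediate without any scaling).
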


\begin{proof}
If (a) is false then there is an observation $t$ and a bundle $\mby$ such that $U(\mby) \geq U(\mbx^t)$ with $e^t \mbp^t \cdot \mbx^t > \mbp^t \cdot \mby$. Since $U$ is locally nonsatiated, there is $\mby'$ such that $U(\mby') > U(\mby) \geq U(\mbx^t)$, with $e^t \mbp^t \cdot \mbx^t > \mbp^t \cdot \mby'$, which contradicts the $\mbe$-rationalizability of $\cal{O}$ by $U$.

For (b), we again prove by contradiction. Suppose that $U$ $\mbe$-cost-rationalizes $\cal{O}$ but that there is an observation $t$ and a bundle $\mby$ such that $U(\mby) > U(\mbx^t)$ with $\mbp^t \cdot \mby \leq e^t \mbp^t \cdot \mbx^t$. By the continuity of $U$ and the fact that $e^t \mbp^t \cdot \mbx^t > 0$, there is $\mby'$ such $\mbp^t \cdot \mby' < e^t \mbp^t \cdot \mbx^t$ and $U(\mby') > U(\mbx^t)$, but this is impossible since $U$ $\mbe$-cost-rationalizes $\cal{O}$.
\end{proof}

We know (from Theorem \ref{theorem:afriat}) that $\mbe$-GARP is sufficient for $\mbe$-rationalizability by a utility function $U$ belonging to $\cal{U}_{WB}$, and therefore (by Proposition \ref{proposition:equivalence}a) it is also sufficient for $\mbe$-cost-rationalizability by a utility function $U$ belonging to $\cal{U}_{WB}$. The necessity of $\mbe$-GARP is stated formally below.

\begin{proposition} \label{proposition:garp}
$\cal{O}$ obeys $\mbe$-GARP whenever any of the following conditions holds:
\begin{itemize}
\item[(i)] $\cal{O}$ is $\mbe$-rationalizable and $\mbe$-cost-rationalizable in $\cal{U}$, where $\cal{U}$ is a collection of utility functions $U : X \to \mbbR$ defined on the consumption space $X \subseteq \mbbR^\ell_+$.
\item[(ii)] $\cal O$ is $\mbe$-rationalizable by a locally nonsatiated utility function $U : \mbbR^\ell_+ \to \mbbR$.
\item[(iii)] $\cal{O}$ is $\mbe$-cost-rationalizable by a continuous utility function $U : \mbbR^\ell_+ \to \mbbR$.
\end{itemize}
\end{proposition}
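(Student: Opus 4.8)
The plan is to prove all three parts at once by isolating a single core fact: if some one utility function $U$ in $\cal{U}$ \emph{simultaneously} $\mbe$-rationalizes and $\mbe$-cost-rationalizes $\cal{O}$, then $\cal{O}$ obeys $\mbe$-GARP. This is exactly the content of (i), and it is essential that one and the same $U$ play both roles: since, for example, a constant utility function $\mbe$-rationalizes every data set, having \emph{separate} witnesses for the two properties would not be enough to force $\mbe$-GARP. Parts (ii) and (iii) then reduce to this core fact via Proposition \ref{proposition:equivalence}. In case (ii), Proposition \ref{proposition:equivalence}(a) shows that a locally nonsatiated $\mbe$-rationalizer on $\mbbR^\ell_+$ is automatically also an $\mbe$-cost-rationalizer, so the same $U$ does both; in case (iii), Proposition \ref{proposition:equivalence}(b) shows that a continuous $\mbe$-cost-rationalizer on $\mbbR^\ell_+$ is automatically also an $\mbe$-rationalizer, again giving a single $U$ that does both. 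Each of (ii) and (iii) is then one line.

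To establish the core fact, I would first note that $\mbe$-rationalizability forces $U$ to respect the revealed preference relation: if $\mbx^a \wpref_0^* \mbx^b$ then $\mbp^a \cdot \mbx^b \leq e^a \mbp^a \cdot \mbx^a$, so $\mbx^b \in \cal{B}^a(e^a)$, and hence $U(\mbx^a) \geq U(\mbx^b)$. Iterating this along the finite chain $(\mbx^t, \mbx^i, \ldots, \mbx^l, \mbx^s)$ that witnesses $\mbx^t \wpref^* \mbx^s$ --- invoking $\mbe$-rationalizability at each of the observations $t, i, \ldots, l$ --- gives $U(\mbx^t) \geq U(\mbx^s)$, i.e. $\mbx^t \in \cal{P}^s$. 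Now suppose, towards a contradiction, that $\mbe$-GARP fails, so that for some pair $t, s$ we have $\mbx^t \wpref^* \mbx^s$ together with $\mbx^s \pref_0^* \mbx^t$; the latter means $\mbp^s \cdot \mbx^t < e^s \mbp^s \cdot \mbx^s$. But $\mbx^t \in \cal{P}^s$, so applying the defining inequality of $\mbe$-cost-rationalizability at observation $s$ yields $e^s \mbp^s \cdot \mbx^s \leq \mbp^s \cdot \mbx^t$, which contradicts the previous strict inequality. Hence $\mbx^s \not\pref_0^* \mbx^t$, which is precisely $\mbe$-GARP.

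None of the individual steps requires anything beyond a short manipulation of the definitions, so there is no genuine analytic obstacle. The only points that need care are organizational: recognizing that (i) has to be read with a single common $U$ (and not two independent witnesses); being sure to use $\mbe$-rationalizability at \emph{every} intermediate observation along the chain rather than only at the endpoints; and drawing the final cost comparison at the one observation, namely $s$, whose budget is strictly undershot by $\mbx^t$. I would present (i) in full detail and then dispatch (ii) and (iii) by citing the appropriate part of Proposition \ref{proposition:equivalence}.
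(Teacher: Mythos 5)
Your proposal is correct and follows essentially the same route as the paper: reduce (ii) and (iii) to (i) via Proposition \ref{proposition:equivalence}, then for (i) use $\mbe$-rationalizability to propagate $U(\mbx^t) \geq U(\mbx^s)$ along the revealed-preference chain and $\mbe$-cost-rationalizability to rule out $\mbx^s \pref_0^* \mbx^t$. The only cosmetic difference is that you derive the final contradiction as a violated price inequality at observation $s$, whereas the paper phrases the same step contrapositively as a strict utility ranking $U(\mbx^s) > U(\mbx^t)$.
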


\begin{proof}
In view of Propositon \ref{proposition:equivalence}, (ii) and (iii) follow once we prove (i). So we prove (i). Indeed, if $\mbp^t \cdot \mbx^s = e^t \mbp^t \cdot \mbx^t$, then $U(\mbx^s) \leq U(\mbx^t)$ since $U$ $\mbe$-rationalizes $\cal O$. Furthermore, if $\mbp^t \cdot \mbx^s < e^t \mbp^t \cdot \mbx^t$, then $U(\mbx^s) < U(\mbx^t)$ since $U$ is $\mbe$-cost-rationalizes $\cal{O}$. Therefore, if $\mbx^t \wpref^* \mbx^s$ then $U(\mbx^t) \geq U(\mbx^s)$ and so $\mbx^s \not\pref_0^* \mbx^t$, as required by $\mbe$-GARP (otherwise, we obtain $U(\mbx^s) > U(\mbx^t)$, which is a contradiction).
\end{proof}

\cite{pq2013} studies a family of utility functions $\cal U$ defined on a discrete consumption space $X$. Part (i) of Proposition \ref{proposition:garp} guarantees that GARP holds if a data set is rationalizable and cost-rationalizable by a utility function in $\cal U$; the authors show that GARP is also sufficient to guarantee the existence of a strictly increasing utility function defined on $X$ that rationalizes and cost-rationalizes the data set.

An implication of Proposition \ref{proposition:garp} is that, for the family of well-behaved utility functions $\cal{U}_{WB}$, $\mbe$-rationalizability and $\mbe$-cost-rationalizability are observationally equivalent because they are both equivalent to $\mbe$-GARP. Similarly, the CCEI (for $\cal{U} = \cal{U}_{WB}$) as defined by (\ref{equation:origin}), which is equal to (\ref{equation:carp}), is also equal to
\begin{equation} \label{equation:new}
\sup \{e : \mbox{$\cal{O}$ is $(e, e, \ldots, e)$-cost-rationalizable in $\cal{U}$}\}.
\end{equation}
This identity gives us a different way of interpreting the CCEI beyond its definition.

\section{Cost-Rationalizability and the CCEI} \label{section:ccei}

First, we argue that when analyzing \textbf{observational data}, it is reasonable to require both rationalizability and cost-rationalizability as rationality criteria. This is because, typically, the researcher would simply have access to just a fraction of the purchases made by the consumer. Even if the consumer did have an exogenously given budget, it is unlikely that all the goods on which that budget is spent are actually part of the researcher's observation.  For example, $\mbx^t$ could be the consumer's purchases of food products from a grocery store. The data set $\cal{O} = \{\mbp^t, \mbx^t\}_{t = 1}^T$ is cost-rationalizable if there is a utility function $U$ defined on those $\ell$ goods (properly speaking, a \emph{sub}-utility function, since her consumption decisions will cover many other (unobserved) goods) so that, at each observation $t$, the bundle $\mbx^t$ is the cheapest way of achieving utility level $U(\mbx^t)$. This is a reasonable rationality criterion to impose since we would expect the consumer to be cost efficient in her purchases of the $\ell$ goods, in order to maximize the money available for purchases of the non-$\ell$ goods.

In the context of cost-rationalizabilty, what does it mean for a consumer to have a CCEI of (say) 0.9? This means that for any $e' = 0.9 - \epsilon$ with $\epsilon > 0$ sufficiently small, $\cal{O}$ is cost-rationalizable with respect to $(e', e', \ldots, e')$. Then there is a well-behaved utility function $U$ defined on the $\ell$ goods, such that there is no observation $t$ at which the agent could have saved more than $100(0.1 + \epsilon)$\% of her observed expenditure $\mbp^t \cdot \mbx^t$ while achieving utility $U(\mbx^t)$. In other words, there could be one (or more) observation(s) at which she is spending more than is necessary given her utility target, but the cost saving is not materially greater than 10\% at any observation. Furthermore, given that 0.9 is the supremum among all possible values of $e$ at which $\cal{O}$ is $(e, e, \ldots, e)$-cost-rationalizable, for \emph{any} well-behaved utility function $U$, there must be at least one observation $s$ such that the agent could have achieved utility $U(\mbx^s)$ and saved more than $100(0.1 - \epsilon)$\% of her observed expenditure $\mbp^s \cdot \mbx^s$.

Now consider analyzing \textbf{laboratory data}, and suppose that a subject's data set is exactly rationalizable. Then the data set is also exactly cost-rationalizable, which means there is a utility function for which the subject will be strictly worse off should the experimenter give the subject a smaller budget (less of the experimental currency) in any one round. The CCEI is a measure of how close the subject is to exact cost-rationalizability. A CCEI of 0.9 means that, for a subject with \emph{any} well-behaved utility function, there will be one round (and possibly more) in which the subject could have achieved the same utility in that round with a budget that is $100(0.1 - \epsilon)$\% smaller. However, for any withdrawal of $100(0.1 + \epsilon)$\% of the budget, there is a well-behaved utility function such that a subject with such a utility function is strictly worse off in some round.

We should emphasize that in the definition of approximate cost-rationalizability, the target utility is contingent on the observation $t$ (or, if we interpret $t$ as time, contingent on time $t$). The definition does \emph{not} require that there is not a cheaper way of achieving the same \emph{set} of utility levels, possibly after altering the observation at which a particular utility level is obtained. For example, consider a data set where $\mbp^1 = (1, 1)$ and $\mbx^1 = (1, 1)$ at observation 1 and $\mbp^2 = (2, 2)$ and $\mbx^2 = (2, 2)$ at observation 2. This data set is rationalizable and cost-rationalizable (because it obeys GARP), but clearly the agent could save money if he had bought the bundle $\mbx^2 = (2, 2)$ at the prices $\mbp^1 = (1, 1)$ and the bundle $\mbx^1 = (1, 1)$ at the prices $\mbp^2 = (2, 2)$. This does not contradict cost-rationalizability because cost-rationalizability allows the agent to choose the timing of the utility targets. This greater permissiveness makes sense: observation 1 could be on food spending in normal times and observation 2 on food spending during a festive period; cost-rationalizability does not preclude the possibility of the agent choosing to derive higher utility from food during a festive period (even when food prices are higher) and lower utility from food in normal times (even when food prices are lower). In essence, cost-rationalizability respects the price-bundle pairs within observations, leaving open the possibility that prices/expenditures are in fact \emph{endogenous}.

\section{Discussion} \label{section:discussion}
 
Approximate rationalizability through cost inefficiency was first introduced within the context of production \citep{afriat1972} and then consumption \citep{afriat1973}.\footnote{\cite{debreu1951} also expresses a similar idea within the context of an exchange economy.} The intuition behind cost inefficiency in production is strong,\footnote{A cost inefficient combination of inputs means that the same level of output can be achieved through a different combination of inputs which costs strictly less at the prevailing factor prices.} and our objective in this note is to provide an equally strong intuition for cost inefficiency within the context of consumer demand. This in turn furnishes an intuitive interpretation for \citeauthor{afriat1973}'s (\citeyear{afriat1973}) critical cost efficiency index (CCEI) which may be useful for researchers in revealed preference. The CCEI is by far the most widely used index for measuring the degree of rationality in a data set and as such has received considerable attention, including some criticism (see, for example, \cite{echenique2022}). Whatever arguments there might be for and against, we do not see any interpretive difficulties with this index. Besides our interpretation, another (more ``psychological'') interpretation of the index based on a consumer's innate inability to distinguish among similar bundles is found in \cite{dziewulski2020}.

Our interpretation of the CCEI in terms of approximate cost-rationalizability remains valid when the index is applied to other classes of utility functions which are narrower than $\cal{U}_{WB}$. Note that one of the strengths of this index is the ease with which it could be calculated for other classes of utility functions. For example, in \cite{polisson2020}, CCEIs are calculated for a wide range of models of decision making within the context of choice under risk (e.g., expected utility ($\cal{U}_{EUT}$), disappointment aversion ($\cal{U}_{DA}$), rank dependent utility ($\cal{U}_{RDU}$), and stochastically monotone utility ($\cal{U}_{FOSD}$)) with and without concavity of the Bernoulli function.\footnote{The approach could also be applied straightforwardly to many prominent models of decision making under uncertainty and over time, which are formally very similar to models of decision making under risk (see \cite{polisson2020}).} It could also be estimated when $\cal{U}$ is a fully parametric class of utility functions, e.g., expected utility with constant relative risk aversion (CRRA).

We note that one could interpret the CCEI as just one way of \emph{aggregating} the efficiency coefficients $\mbe$ into a rationality index or score. Other prominent aggregators of $\mbe$ include the \cite{varian1990} and \cite{houtman1985} indices; see the discussions in \cite{halevy2018} and \cite{polisson2020}. The CCEI is the one most commonly used in empirical work,\footnote{See, for just a few examples, \cite{harbaugh2001}, \cite{andreoni2002}, \cite{choi2007a,choi2014}, \cite{fisman2007}, and \cite{carvalho2016}.} primarily because it is easy to calculate in a practical sense.\footnote{The \cite{afriat1973} index can be calculated efficiently using a binary search algorithm for a wide class of utility models (see \cite{polisson2020} and \cite{dembo2024}), whereas the \cite{varian1990} and \cite{houtman1985} indices are known to be more computationally demanding, and especially when $\cal U$ is not simply ${\cal U}_{WB}$.} There are also other ways of measuring departures from rationality that could not be thought of as simply different ways of aggregating $\mbe$; this includes \cite{echenique2020,echenique2023} and \cite{declippel2023} (which are based on first-order conditions), \cite{echenique2011} (which leverages the idea of a ``money pump''), and \cite{hu2021} (which is based on the size of perturbations of observed demand bundles).

\section*{Appendix} \label{section:appendix}

\noindent \emph{Proof of Theorem 1.} \, $(1) \implies (2)$: The $\mbe$-rationalizability of $\cal{O}$ by $U$ in $\cal{U}_{LNS}$ guarantees that $\mbp^t \cdot \mbx^s \leq e^t \mbp^t \cdot \mbx^t \implies U(\mbx^t) \geq U(\mbx^s)$ and $\mbp^t \cdot \mbx^s < e^t \mbp^t \cdot \mbx^t \implies U(\mbx^t) > U(\mbx^s)$. The first implication follows directly from the definition of $\mbe$-rationalizability of $\cal{O}$ by $U$, and the second from the definition of $\mbe$-rationalizability plus the local nonsatiation of $U$.\footnote{To see the latter, suppose that $\mbp^t \cdot \mbx^s < e^t \mbp^t \cdot \mbx^t \implies U(\mbx^t) = U(\mbx^s)$. (From the $\mbe$-rationalizability of $\cal{O}$ by $U$, $\mbp^t \cdot \mbx^s < e^t \mbp^t \cdot \mbx^t \implies U(\mbx^t) \geq U(\mbx^s)$.) Then by the local nonsatiation of $U$, there must be some bundle $\mby$ such that $\mbp^t \cdot \mby < e^t \mbp^t \cdot \mbx^t$ and $U(\mby) > U(\mbx^s) = U(\mbx^t)$, contradicting $\mbe$-rationalizability.} Therefore, for any $\mbp^t \cdot \mbx^i \leq e^t \mbp^t \cdot \mbx^t$, $\mbp^i \cdot \mbx^j \leq e^i \mbp^i \cdot \mbx^i$, $\ldots$ , $\mbp^l \cdot \mbx^t \leq e^l \mbp^l \cdot \mbx^l$, we obtain
\[
U(\mbx^t) \geq U(\mbx^i) \geq U(\mbx^j) \geq \cdots \geq U(\mbx^l) \geq U(\mbx^t),
\]
which of course implies that $U(\mbx^t) = U(\mbx^i) = U(\mbx^j) = \cdots = U(\mbx^l) = U(\mbx^t)$. This equality then equires that $\mbp^t \cdot \mbx^i = e^t \mbp^t \cdot \mbx^t$, $\mbp^i \cdot \mbx^j = e^i \mbp^i \cdot \mbx^i$, $\ldots$ , $\mbp^l \cdot \mbx^t = e^l \mbp^l \cdot \mbx^l$. In other words, for any revealed preference cycle $\mbx^t \wpref_0^* \mbx^i$, $\mbx^i \wpref_0^* \mbx^j$, $\ldots$ , $\mbx^l \wpref_0^* \mbx^t$, the weak relation $\wpref_0^*$ cannot be replaced with the strict relation $\pref_0^*$ anywhere in the cycle.

$(2) \implies (3)$: Define $a^{t, s} = \mbp^t \cdot (\mbx^s - e^t \mbx^t)$ for all $t, s$, and let $\mbA$ be a $T \times T$ square matrix with $a^{t, s}$ the $(t, s)$-th entry. Since $\cal{O}$ satisfies $\mbe$-GARP, $\mbA$ satisfies cyclical consistency in the sense defined in \cite{forges2009}, which guarantees that there is a set of numbers $\{\phi^t, \lambda^t\}_{t = 1}^T$ (with $\phi^t \in \mbbR$ and $\lambda^t \in \mbbR_{++}$), such that, at all $t, s$, $\phi^s \leqslant \phi^t + \lambda^t a^{t, s}$.

$(3) \implies (4)$: The construction argument used here follows \cite{afriat1973}. For any $\mbx$ define $U(\mbx) = \min_t \{\phi^t + \lambda^t \mbp^t \cdot (\mbx - e^t \mbx^t)\}$, and notice that $U$ is strictly increasing, continuous, and concave. First, we must have $U(\mbx^t) \geq \phi^t$. To see this, suppose that $U(\mbx^t) < \phi^t$; then $U(\mbx^t) = \phi^m + \lambda^m \mbp^m \cdot (\mbx^t - e^m \mbx^m) < \phi^t$, contradicting (3). Second, for any $\mbx$ satisfying $\mbp^t \cdot \mbx \leq e^t \mbp^t \cdot \mbx^t$, we have $\phi^t \geq \phi^t + \lambda^t \mbp^t \cdot (\mbx - e^t \mbx^t)$ since $\lambda^t > 0$, and $U(\mbx) \leq \phi^t + \lambda^t \mbp^t \cdot (\mbx - e^t \mbx^t)$ by the definition of $U$. Taken altogether,
\[
U(\mbx^t) \geq \phi^t \geq \phi^t + \lambda^t \mbp^t \cdot (\mbx - e^t \mbx^t) \geq U(\mbx).
\]

$(4) \implies (1)$: This is obvious since every well-behaved utility function is (by definition) strictly increasing and thus locally nonsatiated. \hfill \qed

\linespread{1.0} \selectfont \small

\bibliography{rpbib}

\begin{thebibliography}{28}
\newcommand{\enquote}[1]{``#1''}
\providecommand{\natexlab}[1]{#1}
\providecommand{\url}[1]{\texttt{#1}}
\providecommand{\urlprefix}{URL }

\bibitem[{Afriat(1967)}]{afriat1967}
Afriat, S.~N. 1967.
\newblock \enquote{The Construction of Utility Functions from Expenditure
  Data.}
\newblock \emph{International Economic Review} 8(1):~67--77.

\bibitem[{Afriat(1972)}]{afriat1972}
---{}---{}---. 1972.
\newblock \enquote{Efficiency Estimation of Production Functions.}
\newblock \emph{International Economic Review} 13(3):~568--598.

\bibitem[{Afriat(1973)}]{afriat1973}
---{}---{}---. 1973.
\newblock \enquote{On a System of Inequalities in Demand Analysis: An Extension
  of the Classical Method.}
\newblock \emph{International Economic Review} 14(2):~460--472.

\bibitem[{Andreoni and Miller(2002)}]{andreoni2002}
Andreoni, J. and J.~Miller. 2002.
\newblock \enquote{Giving According to GARP: An Experimental Test of the
  Consistency of Preferences for Altruism.}
\newblock \emph{Econometrica} 70(2):~737--753.

\bibitem[{Carvalho, Meier, and Wang(2016)}]{carvalho2016}
Carvalho, L.~S., S.~Meier, and S.~W. Wang. 2016.
\newblock \enquote{Poverty and Economic Decision-Making: Evidence from Changes
  in Financial Resources at Payday.}
\newblock \emph{American Economic Review} 106(2):~260--284.

\bibitem[{Choi \emph{et~al.}(2007)Choi, Fisman, Gale, and Kariv}]{choi2007a}
Choi, S., R.~Fisman, D.~Gale, and S.~Kariv. 2007.
\newblock \enquote{Consistency and Heterogeneity of Individual Behavior under
  Uncertainty.}
\newblock \emph{American Economic Review} 97(5):~1921--1938.

\bibitem[{Choi \emph{et~al.}(2014)Choi, Kariv, M\"uller, and
  Silverman}]{choi2014}
Choi, S., S.~Kariv, W.~M\"uller, and D.~Silverman. 2014.
\newblock \enquote{Who Is (More) Rational?}
\newblock \emph{American Economic Review} 104(6):~1518--1550.

\bibitem[{de~Clippel and Rozen(2023)}]{declippel2023}
de~Clippel, G. and K.~Rozen. 2023.
\newblock \enquote{Relaxed Optimization: How Close Is a Consumer to Satisfying
  First-Order Conditions?}
\newblock \emph{Review of Economics and Statistics} 105(4):~883--898.

\bibitem[{Debreu(1951)}]{debreu1951}
Debreu, G. 1951.
\newblock \enquote{The Coefficient of Resource Utilization.}
\newblock \emph{Econometrica} 19(3):~273--292.

\bibitem[{Dembo \emph{et~al.}(2024)Dembo, Kariv, Polisson, and
  Quah}]{dembo2024}
Dembo, A., S.~Kariv, M.~Polisson, and J.~K.-H. Quah. 2024.
\newblock \enquote{Ever Since Allais.}
\newblock Unpublished paper.

\bibitem[{Diewert(1973)}]{diewert1973}
Diewert, W.~E. 1973.
\newblock \enquote{Afriat and Revealed Preference Theory.}
\newblock \emph{Review of Economic Studies} 40(3):~419--425.

\bibitem[{Dziewulski(2020)}]{dziewulski2020}
Dziewulski, P. 2020.
\newblock \enquote{Just-Noticeable Difference as a Behavioural Foundation of
  the Critical Cost-Efficiency Index.}
\newblock \emph{Journal of Economic Theory} 188:~105071.

\bibitem[{Echenique(2022)}]{echenique2022}
Echenique, F. 2022.
\newblock \enquote{On the Meaning of the Critical Cost Efficiency Index.}
\newblock Unpublished paper.

\bibitem[{Echenique, Imai, and Saito(2020)}]{echenique2020}
Echenique, F., T.~Imai, and K.~Saito. 2020.
\newblock \enquote{Testable Implications of Models of Intertemporal Choice:
  Exponential Discounting and Its Generalizations.}
\newblock \emph{American Economic Journal: Microeconomics} 12(4):~114--43.

\bibitem[{Echenique, Imai, and Saito(2023)}]{echenique2023}
---{}---{}---. 2023.
\newblock \enquote{Approximate Expected Utility Rationalization.}
\newblock \emph{Journal of the European Economic Association}
  21(5):~1821--1864.

\bibitem[{Echenique, Lee, and Shum(2011)}]{echenique2011}
Echenique, F., S.~Lee, and M.~Shum. 2011.
\newblock \enquote{The Money Pump as a Measure of Revealed Preference
  Violations.}
\newblock \emph{Journal of Political Economy} 119(6):~1201--1223.

\bibitem[{Fisman, Kariv, and Markovits(2007)}]{fisman2007}
Fisman, R., S.~Kariv, and D.~Markovits. 2007.
\newblock \enquote{Individual Preferences for Giving.}
\newblock \emph{American Economic Review} 97(5):~1858--1876.

\bibitem[{Forges and Minelli(2009)}]{forges2009}
Forges, F. and E.~Minelli. 2009.
\newblock \enquote{Afriat's Theorem for General Budget Sets.}
\newblock \emph{Journal of Economic Theory} 144(1):~135--145.

\bibitem[{Halevy, Persitz, and Zrill(2018)}]{halevy2018}
Halevy, Y., D.~Persitz, and L.~Zrill. 2018.
\newblock \enquote{Parametric Recoverability of Preferences.}
\newblock \emph{Journal of Political Economy} 126(4):~1558--1593.

\bibitem[{Hanoch and Rothschild(1972)}]{hanoch1972}
Hanoch, G. and M.~Rothschild. 1972.
\newblock \enquote{Testing the Assumptions of Production Theory: A
  Nonparametric Approach.}
\newblock \emph{Journal of Political Economy} 80(2):~256--275.

\bibitem[{Harbaugh, Krause, and Berry(2001)}]{harbaugh2001}
Harbaugh, W.~T., K.~Krause, and T.~R. Berry. 2001.
\newblock \enquote{GARP for Kids: On the Development of Rational Choice
  Behavior.}
\newblock \emph{American Economic Review} 91(5):~1539--1545.

\bibitem[{Houtman and Maks(1985)}]{houtman1985}
Houtman, M. and J.~A.~H. Maks. 1985.
\newblock \enquote{Determining All Maximal Data Subsets Consistent with
  Revealed Preference.}
\newblock \emph{Kwantitatieve Methoden} 6(19):~89--104.

\bibitem[{Hu \emph{et~al.}(2021)Hu, Li, Quah, and Tang}]{hu2021}
Hu, G., J.~Li, J.~K.-H. Quah, and R.~Tang. 2021.
\newblock \enquote{A Theory of Revealed Indirect Preference.}
\newblock Unpublished paper.

\bibitem[{Polisson and Quah(2013)}]{pq2013}
Polisson, M. and J.~K.-H. Quah. 2013.
\newblock \enquote{Revealed Preference in a Discrete Consumption Space.}
\newblock \emph{American Economic Journal: Microeconomics} 5(1):~28--34.

\bibitem[{Polisson, Quah, and Renou(2020)}]{polisson2020}
Polisson, M., J.~K.-H. Quah, and L.~Renou. 2020.
\newblock \enquote{Revealed Preferences over Risk and Uncertainty.}
\newblock \emph{American Economic Review} 110(6):~1782--1820.

\bibitem[{Varian(1982)}]{varian1982}
Varian, H.~R. 1982.
\newblock \enquote{The Nonparametric Approach to Demand Analysis.}
\newblock \emph{Econometrica} 50(4):~945--973.

\bibitem[{Varian(1984)}]{varian1984}
---{}---{}---. 1984.
\newblock \enquote{The Nonparametric Approach to Production Analysis.}
\newblock \emph{Econometrica} 52(3):~579--597.

\bibitem[{Varian(1990)}]{varian1990}
---{}---{}---. 1990.
\newblock \enquote{Goodness-of-Fit in Optimizing Models.}
\newblock \emph{Journal of Econometrics} 46(1-2):~125--140.

\end{thebibliography}

\end{document}